\documentclass[11pt]{article}
  %for A4 paper format use option "a4paper", for US-letter use option "letterpaper"
  %for british hyphenation rules use option "UKenglish", for american hyphenation rules use option "USenglish"
 % for section-numbered lemmas etc., use "numberwithinsect"
\usepackage{ifthen}
\usepackage{microtype}%if unwanted, comment out or use option "draft"
\usepackage{boxedminipage}
\usepackage{amsmath}
\usepackage{amsfonts}
\usepackage{amssymb}
\usepackage{float}
\usepackage[small,bf]{caption}
\usepackage{authblk}
%\graphicspath{{./graphics/}}%helpful if your graphic files are in another directory
\usepackage{amssymb,amsfonts,amsthm,epsfig}
\setlength{\topmargin}{-0.25in}
\setlength{\textwidth}{6.5in}
\setlength{\oddsidemargin}{0.0in}
\setlength{\textheight}{8.5in}

\newtheorem{theorem}{Theorem}

\newtheorem{lemma}{Lemma}

\newtheorem{corollary}{Corollary}

\newtheorem{definition}{Definition}

\newfloat{Algorithm}{hbtp}{lop}[section]

\newcommand{\pr}{{\bf Pr}}
\newcommand{\eps}{{\epsilon}}
\newcommand{\C}[1]{{{C}^{(#1)}}}
\newcommand{\itS}[1]{{{S}^{(#1)}}}
\newcommand{\itT}[1]{{{T}^{(#1)}}}
\newcommand{\etal}{{\it et al.}}

\bibliographystyle{plain}% the recommended bibstyle

\begin{document}

% Author macros %%%%%%%%%%%%%%%%%%%%%%%%%%%%%%%%%%%%%%%%%%%%%%%%
\title{Improved analysis of $D^2$-sampling based PTAS for $k$-means and other Clustering problems}

%% or include affiliations in footnotes:
\author[1]{Ragesh Jaiswal \thanks{Corresponding author: \texttt{rjaiswal@cse.iitd.ac.in}}}
\author[1]{Mehul Kumar}
\author[1]{Pulkit Yadav}

\affil[1]{Department of Computer Science and Engineering, IIT Delhi}

\date{}

\maketitle

\begin{abstract}
We give an improved analysis of the simple $D^2$-sampling based PTAS for the $k$-means clustering problem given by Jaiswal, Kumar, and Sen~\cite{jks12}.
The improvement on the running time is from $O\left(nd \cdot 2^{\tilde{O}(k^2/\eps)}\right)$ to $O\left(nd \cdot 2^{\tilde{O}(k/\eps)}\right)$.
\end{abstract}

\section{Introduction}
We present this work as a short note and this may be regarded as an extension of the work by Jaiswal, Kumar, and Sen~\cite{jks12}.
We would like to point the reader to \cite{jks12} for more elaborate discussions regarding the past work in designing PTAS for the $k$-means problem and sampling based algorithms.
Here, we will only discuss our improved analysis.
However, we will not skip any essential details about the algorithm or the analysis and this paper will be self-contained.
Note that this will be at the cost of having some basic material borrowed from \cite{jks12}.

\paragraph{Generalized $k$-median problem and $k$-means} 
The general $k$-median problem may be defined as follows: 
Let $\mathcal{X}$ be any input space and let $D$ denote the distance measure defined over pair of points of this input space. 
Given a set $P \subseteq \mathcal{X}$ with $|P| = n$, and a parameter $k$, find a set $C \subseteq \mathcal{X}$ with $|C| = k$ such that the following objective function is minimized: $\Delta(P, C) = \sum_{p \in P} D(p, C)$.
Here, $D(p, C)$ denotes the distance of $p$ from the closest center in $C$.
The $k$-means problem is a special case of the generalized $k$-median problem where $\mathcal{X} = \mathbb{R}^d$ and $\forall p, q \in \mathcal{X}, D(p, q) = e(p, q)^2$, where $e(p, q)$ denotes the Euclidean distance between points $p$ and $q$.
We will discuss both the generalized problem and the $k$-means problem.
We give the algorithm and analysis for the generalized problem and then state the result related to the $k$-means problem.

\paragraph{Related work}
The $k$-means problem is $\mathsf{NP}$-hard and there are exact algorithms that solve the problem in time $\Omega(n^{kd})$.
Moreover, there are hardness of approximation results known for this problem showing that approximating beyond a fixed constant factor is unlikely. 
However, there has been significant research in obtaining $(1 + \eps)$-approximation algorithms for the problem that run in time that is exponential in $k$ since in many cases $k$ can be assumed to be a small constant.
We usually call such algorithms PTAS (under the implicit assumption that $k$ is a constant).
We advise the reader to see a more detailed description of the previous work in \cite{jks12}.
The running time of our algorithm match those of Ackermann \etal~\cite{abs10}.
They use an improved analysis of a PTAS by Kumar \etal~\cite{KumarSS10} and show that the running time of their PTAS is $O(n d \cdot 2^{\tilde{O}(k/\epsilon)})$.
\footnote{$\tilde{O}$ notation hides a $O(\log{(k/\eps)})$ factor which simplifies the expression.}

\paragraph{Our contributions} 
In this paper, we give an improved analysis of the PTAS of Jaiswal \etal~\cite{jks12} and using this obtain a PTAS for the $k$-means problem with running time $O(n d \cdot 2^{\tilde{O}(k/\epsilon)})$. 
The analysis in~\cite{jks12} crucially used a {\em separation} or {\em irreducibility} property. 
The main idea of the new analysis is to remove this dependency.
\footnote{The reader familiar with \cite{jks12} may note that the separation condition was used in \cite{jks12} to show that while sampling, there is an optimal cluster such that {\em all} points from that cluster have a good chance of being sampled.
The crucial observation in this work is that this property is not really required since the points that have small chance to be sampled are close to the already chosen centers. 
So, we just need to worry about picking a good center for the points that have a good chance of being sampled.}
Using the coreset construction of Chen~\cite{Chen09} with our algorithm, we can obtain a $(1+\eps)$-approximation algorithm with running time $O\left(nkd + d^2 \cdot 2^{\tilde{O}(k/\eps)} \cdot \log^{k+2}{(n)} \right)$.
This is a standard method to obtain a faster algorithm.
The reader is encouraged to see the work of Chen~\cite{Chen09} and Ackermann and Bl\"{o}mer~\cite{ab10} for this.

\paragraph{Organization} We give some preliminary results in Section~\ref{sec:pre}. We give the description of our algorithm and its analysis in Section~\ref{sec:algo-analysis}. In Section~\ref{sec:k-means}, we state our results for the $k$-means problem.

%%%%%%%%%%%%% Section %%%%%%%%%%%%%
\section{Preliminaries}
\label{sec:pre}

Given a set of points $P \subseteq \mathcal{X}$ and a set of centers $C \subseteq \mathcal{X}$, $\Delta(P, C) = \sum_{p \in P} \min_{c \in C} D(p, c)$.
For singleton $C = \{c\}$, we shall often abuse notation and use $\Delta(P, c)$.
For any positive integer $k$, we denote the cost of the optimal $k$-means solution for $P$ using $\Delta_k(P)$.  

Our results hold, given that the dissimilarity measure satisfies certain properties. 
Note that many of the measures that are actually used in practice, do satisfy these properties and hence they are not restrictive. 
The reader should see previous works \cite{jks12, abs10} for detailed discussions.
Here, we state the properties that we will be using.
First, we will describe a property that apart from minor differences, is basically a definition by Ackermann et al. \cite{abs10} (see Theorem 1.1) who discuss PTAS 
for the $k$-median problem with respect to metric and non-metric distance measures.

\begin{definition}[$(f, \gamma, \delta)$-Sampling property]
Given $0 < \gamma, \delta \leq 1$ and $f: \mathbb{R} \times \mathbb{R} \rightarrow \mathbb{R}$, a distance measure $D$ over space $\mathcal{X}$ is said to have $(f, \gamma, \delta)$-sampling property if the following holds:
for any set $P \subseteq \mathcal{X}$, a uniformly random sample $S$ of $f(\gamma, \delta)$ points from $P$ satisfies 
$$\pr\left[\sum_{p \in P} D(p, \Gamma(S)) \leq (1+\gamma) \cdot \Delta_1(P)\right] \geq (1 - \delta),$$
where $\Gamma(S) = \frac{\sum_{s \in S} s}{|S|}$ denotes the mean of points in $S$.
\end{definition}

\begin{definition}[Centroid property]
A distance measure $D$ over space $\mathcal{X}$ is said to satisfy the centroid property if for any subset $P \subseteq \mathcal{X}$ and any point $c \in \mathcal{X}$, we have: $\sum_{p \in P} D(p, c) = \Delta_{1}(P) + |P| \cdot D(\Gamma(P), c)$,
where $\Gamma(P) = \frac{\sum_{p \in P} p}{|P|}$ denotes the mean of the points in $P$.
\end{definition}

\begin{definition}[$\alpha$-approximate triangle inequality]
Given $\alpha \geq 1$, a distance measure $D$ over space $\mathcal{X}$ is said to satisfy $\alpha$-approximate triangle inequality if for any three points $p, q, r \in \mathcal{X},
D(p, q) \leq \alpha \cdot (D(p, r) + D(r, q))$
\end{definition}

\begin{definition}[$\beta$-approximate symmetry]
Given $0 < \beta \leq 1$, a distance measure $D$ over space $\mathcal{X}$ is said to satisfy $\beta$-symmetric property if for any pair of points $p, q \in \mathcal{X}$, $\beta \cdot D(q, p) \leq D(p, q) \leq \frac{1}{\beta} \cdot D(q, p) $
\end{definition}

\noindent
We will also use the Chernoff bound.
\begin{theorem}[Chernoff bound]\label{thm:chernoff}
Let $X_1, ..., X_n$ be independent 0/1 random variables. Let $X = \sum_{i} X_i$ and $\mu = \sum_i E[X_i]$. 
Let $\delta > 0$ be any real number. Then  $\pr[X \leq (1 - \delta) \cdot \mu] \leq e^{-\delta^2 \mu/2}$.
\end{theorem}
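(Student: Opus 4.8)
The statement is the standard multiplicative Chernoff bound for the lower tail, so the plan is to use the exponential moment method. First I would introduce a parameter $t > 0$, to be optimized at the end, and observe that since $x \mapsto e^{-tx}$ is decreasing for $t>0$, the event $\{X \le (1-\delta)\mu\}$ coincides with $\{e^{-tX} \ge e^{-t(1-\delta)\mu}\}$. Applying Markov's inequality to the nonnegative random variable $e^{-tX}$ then gives $\pr[X \le (1-\delta)\mu] \le e^{t(1-\delta)\mu}\cdot E[e^{-tX}]$, which reduces the problem to bounding the moment generating function of $X$ evaluated at $-t$.

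\textbf{Controlling the moment generating function.} Next I would exploit independence of the $X_i$ to write $E[e^{-tX}] = \prod_{i=1}^n E[e^{-tX_i}]$. Setting $p_i = E[X_i]$, each factor equals $1 - p_i + p_i e^{-t} = 1 + p_i(e^{-t}-1)$, and the elementary bound $1 + z \le e^z$ (valid for all real $z$) gives $E[e^{-tX_i}] \le \exp(p_i(e^{-t}-1))$. Taking the product and using $\sum_i p_i = \mu$ yields $E[e^{-tX}] \le \exp(\mu(e^{-t}-1))$, so that $\pr[X \le (1-\delta)\mu] \le \exp\!\big(\mu(e^{-t} - 1 + t(1-\delta))\big)$.

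\textbf{Optimizing $t$ and closing the gap.} It then remains to choose $t$ and simplify the exponent. If $\delta \ge 1$ the target event is either impossible or equal to $\{X = 0\}$, and the claim follows directly from $\prod_i(1-p_i) \le e^{-\mu}$, so I would assume $0 < \delta < 1$. Differentiating $g(t) = e^{-t} - 1 + t(1-\delta)$ shows that the minimum over $t>0$ is attained at $t = \ln\frac{1}{1-\delta} > 0$, where $g(t) = -\delta - (1-\delta)\ln(1-\delta)$; this already gives the classical form $\pr[X \le (1-\delta)\mu] \le (e^{-\delta}/(1-\delta)^{1-\delta})^{\mu}$. The only part requiring a little care — and the closest thing to an obstacle here — is the scalar inequality $\delta + (1-\delta)\ln(1-\delta) \ge \delta^2/2$ for $0 < \delta < 1$; I would establish it by expanding $\ln(1-\delta) = -\sum_{j \ge 1}\delta^j/j$, which after rearrangement gives $\delta + (1-\delta)\ln(1-\delta) = \sum_{j \ge 2}\frac{\delta^j}{j(j-1)} \ge \frac{\delta^2}{2}$. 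Substituting this into the exponent yields $\pr[X \le (1-\delta)\mu] \le e^{-\delta^2\mu/2}$, as claimed.
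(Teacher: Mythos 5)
Your proof is correct and complete: the exponential-moment (Markov plus MGF) argument, the optimization at $t=\ln\frac{1}{1-\delta}$, and the series expansion giving $\delta+(1-\delta)\ln(1-\delta)=\sum_{j\ge 2}\frac{\delta^j}{j(j-1)}\ge\frac{\delta^2}{2}$ are all accurate, and you correctly dispatch the $\delta\ge 1$ edge case. Note that the paper itself states this Chernoff bound as a standard tool without proof, so there is no in-paper argument to compare against; yours is the canonical derivation and fully justifies the stated inequality.
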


%%%%%%%%%%%%% Section %%%%%%%%%%%%%
\section{Our Algorithm and Analysis}
\label{sec:algo-analysis}

\begin{definition}[$D^2$-sampling]
Given a set of points $P$ and a set of centers $C$, a point $p \in P$ is said to be sampled using {\em $D^2$-sampling}
with respect to $C$ if the probability of it being sampled, $\rho(p)$,  is given by
$\rho(p) = \frac{D(p,C)}{\sum_{x \in P} D(x,C)} = \frac{\Delta(\{p\}, C)}{\Delta(P,C)}.$
\end{definition}

Here is a high level description of our algorithm. 
Essentially, the algorithm maintains a set $C$ of centers, where $|C| \leq k$. 
Initially $C$ is empty, and in each iteration, it adds one center to $C$ till its size reaches $k$. 
Given a set $C$, it samples a multiset $S$ of $N$ points from $P$ using $D^2$-sampling with respect to $C$. 
Then it picks  a subset $T$ of $S$ of size $M$, and adds the centroid of $T$ to $C$. 
The algorithm cycles through all possible $\binom{N}{M}$ subsets of size $M$ of $S$ as choices for $T$, and for each such choice,
repeats the above steps to find the next center, and so on. 
Repetition is done for probability amplification.
Algorithm~\ref{fig:k} gives a concise representation of the above algorithm.
Next, we state our main theorem with respect to our algorithm and prove this theorem in the rest of this section.

\begin{center}
\begin{Algorithm}[h]
\begin{boxedminipage}{6.5in}
{\bf Find-$k$-means($P, k, \xi$)}

\hspace{0.1in} - Let $\eta = \frac{2 \alpha^2}{\beta^2} \left(1+\frac{1}{\beta} \right)$, $N = \frac{64 \alpha \eta k}{\beta \eps^2} \cdot f \left( \frac{\eps}{2 \eta}, 0.2 \right)$, 
%
%\hspace{0.25in} 
$M = f \left( \frac{\eps}{2 \eta}, 0.2 \right) $, $\kappa = \binom{N}{M}$

\hspace{0.1in} - Repeat $2^k$ times and output the the set of centers $C$ that give least cost

\hspace{0.3in} - Make a call to {\bf Sample-centers$(P, k, \xi, 0, \{\})$} and select $C$ from the set of solutions

\hspace{0.4in} that gives the least cost.

{\bf Sample-centers$(P, k, \xi, i, C)$}

\hspace{0.1in} (1) If $(i = k)$ then add $C$ to the set of solutions

\hspace{0.1in} (2) else

\hspace{0.3in} (a) Sample a multiset $S$ of $N$ points with $D^2$-sampling (w.r.t. centers $C$)

\hspace{0.3in} (b) For all $s_i \in \{1, ..., \kappa\}$

\hspace{0.5in} (i) Let $T$ be the ${s_i}^{th}$ subset\footnote{To be able to define this, we fix an arbitrary ordering of points in $P$ that defines an ordering on the points in $S$. 
Also, for any set of size $N$, we fix an arbitrary ordering of the subsets of size $M$ of this set. } of $S$. 
$C \leftarrow C \cup \{\Gamma(T)\}$.\footnote{$\Gamma(T)$ denote the centroid of the points in $T$ i.e., $\Gamma(T) = \frac{\sum_{t \in T} t}{|T|}$.}

\hspace{0.5in} (ii) {\bf Sample-centers$(P, k, \xi, i+1, C)$}
\end{boxedminipage}
\caption{{\bf Find-$k$-means($P, k, \eps$)} gives $(1+\eps)$-approximation for data sets $P \subseteq \mathcal{X}$, where the distance measure $D$ over $\mathcal{X}$ that satisfies $\alpha$-approximate triangle inequality, $\beta$-approximate symmetry, Centroid property, and $(f, \gamma, \delta)$-sampling property.}
\label{fig:k}
\end{Algorithm}
\end{center}

\begin{theorem}[Main Theorem]\label{thm:other}
Let $f:\mathbb{R} \times \mathbb{R} \rightarrow \mathbb{R}$. 
Let $\alpha \geq 1$, $0 < \beta \leq 1$, and $0 < \delta < 1/2$ be constants and let $0 < \eps \leq 1$. Let $\eta = \frac{2\alpha^2}{\beta^2}(1 + 1/\beta)$. Let $D$ be a distance measure over space $\mathcal{X}$ that $D$ follows:
(a) $\beta$-approximate symmetry property,
(b) $\alpha$-approximate triangle inequality,
(c) Centroid property, and
(d) $(f, \epsilon, \delta)$-sampling property.
{\bf Find-$k$-means$\left(P, k, \eps \right)$} runs in time $O\left(nd  \cdot 2^{\tilde{O}\left(k \cdot f \left(\frac{\eps}{2 \eta}, 0.2 \right)\right)}\right)$ and gives a $(1 + \epsilon)$-approximation to the $k$-median objective for any point set $P \subseteq \mathcal{X}, |P| = n$.
\end{theorem}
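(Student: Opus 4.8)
The plan is to run an inductive covering argument over the $k$ sampling rounds. Fix an optimal $k$-median solution $O=\{o_1,\dots,o_k\}$ with induced clusters $P_1,\dots,P_k$; by the centroid property $o_j=\Gamma(P_j)$, $\Delta_1(P_j)=\Delta(P_j,o_j)$, and $\sum_j\Delta_1(P_j)=\Delta_{\mathrm{opt}}$, the optimum cost. Say a cluster $P_j$ is \emph{covered} by a center set $C$ if $\Delta(P_j,C)\le(1+\eps)\Delta_1(P_j)+\tfrac{\eps}{2k}\Delta_{\mathrm{opt}}$. Three elementary observations set up the induction: coverage is preserved when more centers are added to $C$, since each $D(p,C)$ only decreases; if all $k$ clusters are covered then $\Delta(P,C)=\sum_j\Delta(P_j,C)\le(1+2\eps)\Delta_{\mathrm{opt}}$, so it suffices (after rescaling $\eps$ by a constant) to reach such a state; and $\Delta_1(A')\le\Delta_1(A)$ whenever $A'\subseteq A$, from the centroid property and nonnegativity of $D$.

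I would then show, by induction on $i$, that one call to \textbf{Sample-centers} reaches after $i$ recursion levels a center set of size $i$ covering at least $i$ clusters (or already of cost at most $(1+2\eps)\Delta_{\mathrm{opt}}$) with probability at least $2^{-i}$; the case $i=k$, together with the $2^k$ outer repetitions, then yields a $(1+2\eps)$-approximate solution with probability $\ge 1-(1-2^{-k})^{2^k}\ge 1-1/e$. For the inductive step, assume $C$ covers $i$ clusters but $\Delta(P,C)>(1+2\eps)\Delta_{\mathrm{opt}}$. Summing the coverage inequality over the covered clusters and subtracting gives $\sum_{j\ \mathrm{uncov.}}\Delta(P_j,C)>\tfrac{\eps}{2}\Delta_{\mathrm{opt}}$, so the uncovered cluster $P_{j^\star}$ of largest cost satisfies $\Delta(P_{j^\star},C)>\tfrac{\eps}{2k}\Delta_{\mathrm{opt}}$. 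Following the paper's idea, split $P_{j^\star}$ into a \emph{near} part $\{p\in P_{j^\star}:D(p,C)\le\tau\}$ with $\tau=\tfrac{\eps}{4kn}\Delta_{\mathrm{opt}}$ and a \emph{far} part $F$. The near part has total cost at most $\tfrac{\eps}{4k}\Delta_{\mathrm{opt}}$, which (since centers are never removed from $C$) is absorbed into the additive slack of the coverage definition; so it remains only to add one center $c$ with $\Delta(F,c)\le(1+\eps)\Delta_1(F)$, which by subadditivity of $\Delta_1$ and the accounting of the near part makes $P_{j^\star}$ covered.

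The far part carries cost $\Delta(F,C)>\tfrac{\eps}{4k}\Delta_{\mathrm{opt}}$, so a short case split — on whether $\Delta(P,C)$ is much larger than $\Delta_{\mathrm{opt}}$ (then the uncovered mass, hence $\Delta(F,C)$, is a constant fraction of $\Delta(P,C)$) or only a constant multiple of it (then $\Delta(F,C)$ is an $\Omega(\eps/k)$-fraction of it) — shows that each $D^2$-sample w.r.t.\ $C$ lands in $F$ with probability $\Omega(\eps/k)$. Since $N=\Theta(\tfrac{k}{\eps^2}M)$, the expected number of samples in $F$ is $\Omega(M/\eps)\gg M$, so by the Chernoff bound of Theorem~\ref{thm:chernoff} at least $M$ of the $N$ sampled points lie in $F$ except with probability $o(1)$. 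Conditioned on this, the algorithm's enumeration over all $M$-subsets of $S$ in particular considers an $M$-subset $T\subseteq S\cap F$; the heart of the proof is to argue that, with probability at least $0.8$, some such $T$ has $\Gamma(T)$ a near-optimal $1$-median of $F$. This is where the $(f,\gamma,\delta)$-sampling property is invoked: one exhibits an auxiliary point set for which the $D^2$-conditional sample inside $F$ behaves like a uniform sample, applies the $(f,\tfrac{\eps}{2\eta},0.2)$-sampling property to it, and transfers the guarantee back to $\Delta(F,\Gamma(T))$ at the cost of the multiplicative factor $\eta=\tfrac{2\alpha^2}{\beta^2}(1+\tfrac1\beta)$ produced by the $\alpha$-approximate triangle inequality and $\beta$-approximate symmetry — which is exactly why the algorithm sets $M=f(\tfrac{\eps}{2\eta},0.2)$ and uses sampling accuracy $\tfrac{\eps}{2\eta}$.

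I expect this last step to be the main obstacle: making the near/far split precise and showing that the $D^2$-samples that fall in $F$ can be post-selected (by choosing the right $M$-subset) into data for which the $(f,\gamma,\delta)$-sampling property applies with only an $\eta$-factor loss, so that $\Gamma(T)$ is a $(1+\eps)$-approximate center of $F$. Everything else is bookkeeping: a union bound over the ``enough samples in $F$'' and ``sampling property fails'' events keeps the step's success probability above $\tfrac12$, closing the induction; the final guarantee $\Delta(P,C)\le(1+\eps)\Delta_{\mathrm{opt}}$ follows by summing coverage over all clusters and rescaling $\eps$; and the running time is the count $2^k$ repetitions $\times$ a depth-$k$ recursion tree of branching $\kappa=\binom{N}{M}=2^{\tilde O(M)}$ (as $N/M=O(k/\eps^2)$) $\times$ $O(nd)$ work per node, i.e.\ $O\!\big(nd\cdot 2^{\tilde O(k\,f(\eps/2\eta,0.2))}\big)$.
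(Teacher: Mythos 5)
Your overall scaffolding (per-cluster coverage invariant, $2^{-k}$ success probability, choosing the most expensive uncovered cluster, near/far split, running-time count) matches the paper, but the step you yourself flag as ``the heart of the proof'' is a genuine gap, and with your particular near/far split it cannot be repaired at the claimed sample size. You define the far set $F$ by the absolute threshold $D(p,C)>\tau$ with $\tau=\frac{\eps}{4kn}\Delta_{\mathrm{opt}}$. Under $D^2$-sampling the conditional distribution on $F$ is proportional to $D(p,C)$ and can be arbitrarily skewed: a few points of $F$ may carry almost all of $\Delta(F,C)$, so having $M$ samples land in $F$ (your Chernoff step) in no way yields an $M$-subset of $S\cap F$ that ``behaves like a uniform sample'' from $F$, and the $(f,\gamma,\delta)$-sampling property only speaks about uniform samples. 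The only per-point lower bound your threshold gives is $D(p,C)/\Delta(P,C)\gtrsim\frac{\eps}{kn}$, i.e.\ relative to $n$ rather than to the cluster size, so any rejection/coupling argument that converts $D^2$-samples into uniform ones at this rate would force $N=\Omega\bigl(\frac{kn}{\eps}\,M\bigr)$, destroying the $nd\cdot 2^{\tilde O(k/\eps)}$ bound. You also misplace where $\eta$ comes from: in the paper there is no ``transfer back with an $\eta$-factor loss'' from an auxiliary set; the sampling property is applied directly to the far part with accuracy $\frac{\eps}{2\eta}$.

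The paper's fix is precisely the piece you are missing. The split is done inside the chosen cluster $O_{j_i}$ by \emph{conditional sampling probability}: $L_{j_i}$ consists of the points with $\frac{D(p,C^{(i-1)})}{\Delta(O_{j_i},C^{(i-1)})}\le\frac{\beta\eps}{8\alpha\eta}\cdot\frac{1}{m_{j_i}}$, and $H_{j_i}$ is the rest. This buys two things simultaneously. First (Lemmas~\ref{lem:key-repeat} and the lemma following it), using the centroid property, $\alpha$-approximate triangle inequality and $\beta$-approximate symmetry, each $p\in L_{j_i}$ satisfies $D(p,C^{(i-1)})\le\frac{\eps}{4\eta}(r_{j_i}+D(p,c_{j_i}))$, so $\Delta(L_{j_i},C^{(i-1)})\le\frac{\eps}{2\eta}\Delta_1(O_{j_i})$ --- a \emph{multiplicative} charge against that cluster's own optimal cost (this is where $\alpha,\beta$, hence $\eta$, actually enter), which keeps the per-cluster invariant $\Delta(O_{j_i},\cdot)\le(1+\eps/\eta)\Delta_1(O_{j_i})$ intact, rather than an additive $\frac{\eps}{4k}\Delta_{\mathrm{opt}}$ slack. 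Second, combined with Corollary~\ref{cor:sample-repeat}, every point of $H_{j_i}$ is $D^2$-sampled with probability at least $\gamma/m_{j_i}$ with $\gamma=\frac{\beta\eps^2}{16\alpha\eta k}$, a bound relative to the cluster size; Lemma~\ref{lem:final-repeat} then thins each sample (accept $p$ with probability $\gamma/\lambda(p)$) to couple the $D^2$-samples with genuinely uniform samples from $H_{j_i}$, and Lemma~\ref{lem:clinaba-repeat} (Chernoff on the number of accepted samples plus the $(f,\frac{\eps}{2\eta},0.2)$-sampling property applied \emph{to $H_{j_i}$ itself}) produces the required $M$-subset with probability at least $3/4$. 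Without a per-point lower bound of order $1/m_{j_i}$ --- which your absolute-threshold $F$ does not provide --- this coupling, and hence the invocation of the sampling property, does not go through.
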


We develop some notation first. 
Let $\C{i}$ be the set $C$ at the end of the $i^{th}$ recursive call of {\bf Sample-centers}.
To begin with $\C{0}$ is empty. 
Let $\itS{i}$ be the multiset $S$ sampled during the $i^{th}$ recursive call, and $\itT{i}$ be the corresponding set $T$ (which is the $s_i^{th}$ subset of $\itS{i}$).
Let $O_1, \ldots, O_k$ be the optimal clusters, and $c_1,...,c_k$ denote the respective optimal cluster centers. 
Further, let $m_i$ denote $|O_i|$.
Let $r_i$ denote the average cost paid by a point in $O_i$, i.e.,
$r_i = \frac{\sum_{p \in O_i} D(p,c_i)}{m_i}$.

We give an outline of the proof. 
Suppose before the  $i^{th}$ recursive call to {\bf Sample-centers}, 
we have found centers which are close to the centers of some $(i-1)$ clusters in the optimal solution. 
Conditioned on this fact, we show that in the next call, we are likely to sample enough {\em far-away} points from one of the remaining clusters.
We will show that the points that are closer to the currently chosen centers have a very small overhead if assigned to their closest center.
Further, we show that the samples from this new cluster are close to uniform distribution (c.f. Lemma~\ref{lem:key-repeat}). 
Since such a sample does not come from exactly uniform distribution, we cannot use the $(f, \gamma, \delta)$-sampling property directly. 

We now show that the following invariant will hold for all calls: let $\C{i-1}$ consist of centers
$c_1', \ldots, c_{i-1}'$ (added in this order). 
Then, with probability at least $\frac{1}{2^{i}}$, there exist distinct indices $j_1, \ldots, j_{i-1}$ such that for all $l = 1, \ldots, i-1$, $\Delta(O_{j_l},c_l') \leq (1 + \eps/\eta) \cdot \Delta(O_{j_l},c_{j_l})$.
Here, $\eta$ is a fixed constant that depends on $\alpha$ and $\beta$. With foresight, we fix the value of $\eta = \frac{2 \alpha^2}{\beta^2} \cdot (1 + 1/\beta)$.
Suppose this invariant holds for $\C{i-1}$ (the base case is easy since $\C{0}$ is empty).
We will then show that this invariant holds for $\C{i}$ as well (unless the current set of centers already give a $(1+\eps)$-approximation in which case there is nothing more to be shown). 
In other words, we just show that in the $i^{th}$ recursive call, with probability at least $1/2$,  the algorithm finds a center $c_{i}'$  such that
$ \Delta(O_{j_i},c_i') \leq (1 + \eps/\eta) \cdot \Delta(O_{j_i},c_{j_i})$, where $j_i$ is an index distinct from
$\{j_1, \ldots, j_{i-1}\}$.
This will basically show that at the end of the last call, we will have $k$ centers that give a $(1 + \eps)$-approximation with probability at least $2^{- k}$.

We now show that the invariant holds for $\C{i}$. We use the notation developed above for $\C{i-1}$. 
Let $J$ denote the set of indices $\{j_1, \ldots, j_{i-1}\}$. 
Now let $j_i$ be the index $j \notin J$ for which $\Delta(O_j, \C{i-1})$
is maximum. 
Intuitively, conditioned on sampling from $\cup_{l \notin J}O_{l}$ using $D^2$-sampling, it is  likely that
enough points from $O_{j_i}$ will be sampled.
A simple corollary of the next lemma shows that there is good chance that elements from the sets $O_{j_i}$ will be sampled, unless the current set of centers already gives the required $(1+\eps)$-approximation factor in which case we are done.

\begin{lemma}
\label{lem:sampled-repeat}
If $\frac{\sum_{l \notin J} \Delta(O_l, \C{i-1})}{\sum_{l=1}^k \Delta(O_l, \C{i-1})} \leq \eps/2$, then $\Delta(P, C^{(i-1)}) \leq (1 + \eps) \cdot \Delta_k(P)$.
\end{lemma}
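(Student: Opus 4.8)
The plan is to prove the stated bound directly from the two facts we already have in hand: the partition of $P$ into optimal clusters, and the invariant carried by $\C{i-1}$. The idea is to split the cost $\Delta(P,\C{i-1})$ into the part coming from clusters whose indices lie in $J$ — where the invariant already guarantees a center of $\C{i-1}$ within a $(1+\eps/\eta)$ factor of the optimal cost — and the part coming from clusters with indices outside $J$, which the hypothesis forces to be a small fraction of the whole.

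First I would use that $O_1,\dots,O_k$ partition $P$ and that $\Delta(\cdot,\C{i-1})$ charges each point to its nearest center, so that $\Delta(P,\C{i-1})=\sum_{l=1}^{k}\Delta(O_l,\C{i-1})$. Breaking this sum at $J$ and applying the hypothesis $\sum_{l\notin J}\Delta(O_l,\C{i-1})\le\tfrac{\eps}{2}\sum_{l=1}^{k}\Delta(O_l,\C{i-1})=\tfrac{\eps}{2}\,\Delta(P,\C{i-1})$, then rearranging, gives
\[
\left(1-\tfrac{\eps}{2}\right)\Delta(P,\C{i-1})\;\le\;\sum_{l\in J}\Delta(O_l,\C{i-1}).
\]
Next I would bound the right-hand side using the invariant: writing $J=\{j_1,\dots,j_{i-1}\}$ and letting $c_t'$ be the center of $\C{i-1}$ associated with $O_{j_t}$, we have $\Delta(O_{j_t},\C{i-1})\le\Delta(O_{j_t},c_t')\le(1+\eps/\eta)\,\Delta(O_{j_t},c_{j_t})$, the first inequality because $c_t'\in\C{i-1}$ and the second being the invariant. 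Summing over the distinct indices $j_1,\dots,j_{i-1}$ yields $\sum_{l\in J}\Delta(O_l,\C{i-1})\le(1+\eps/\eta)\sum_{l\in J}\Delta(O_l,c_l)\le(1+\eps/\eta)\,\Delta_k(P)$, and combining with the previous display, $\Delta(P,\C{i-1})\le\frac{1+\eps/\eta}{1-\eps/2}\,\Delta_k(P)$.

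It then remains to check $\frac{1+\eps/\eta}{1-\eps/2}\le 1+\eps$, which is a routine manipulation in the parameter regime at hand — $0<\eps\le 1$ together with the size of $\eta$ (note $\eta\ge 4$ since $\alpha\ge 1$ and $0<\beta\le 1$) — reducing, after clearing denominators, to a comparison of low-degree polynomials in $\eps$. I do not expect a genuine obstacle here: the one point to be careful about is that the ``outside-$J$'' mass must be measured against the current center set $\C{i-1}$ rather than against the optimal centers $c_l$, which is exactly the form in which the hypothesis is stated and is what makes the split close; the rest is bookkeeping. If additional slack is wanted in the last inequality, the same argument can be rerun with $\eps$ replaced by a fixed constant fraction of $\eps$ in the invariant and in the algorithm's parameters, with no change to the statement of the main theorem.
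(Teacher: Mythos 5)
Your proposal is correct and follows essentially the same route as the paper: split $\Delta(P,\C{i-1})$ over clusters in $J$ versus outside $J$, absorb the outside-$J$ mass via the hypothesis (your rearrangement with the factor $1-\eps/2$ is algebraically the same as the paper's $\frac{\eps/2}{1-\eps/2}$ step), apply the invariant on $J$, and finish with the numerical check $\frac{1+\eps/\eta}{1-\eps/2}\le 1+\eps$, which is exactly the paper's final step (and carries the same mild dependence on $\eta$ and the range of $\eps$ that you already flag). Nothing further is needed.
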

\begin{proof}
We have:
{
\allowdisplaybreaks
\begin{eqnarray*}
\Delta(P, \C{i-1}) &=& \sum_{l \in J}  \Delta(O_l, \C{i-1}) +  \sum_{l \notin J} \Delta(O_l, \C{i-1})  \\
&\leq&  \sum_{l \in J} \Delta(O_l, \C{i-1})  +  \frac{\eps/2}{1 - \eps/2} \cdot \sum_{l \in J}  \Delta(O_l, \C{i-1})
\quad \textrm{(by hypothesis)}\\
&=& \frac{1}{1 - \eps/2} \cdot \sum_{l \in J} \Delta(O_l, \C{i-1}) \\
&\leq& \frac{1 + \eps/\eta}{1 - \eps/2} \cdot \sum_{l \in J}  \Delta_1(O_l) \quad \textrm{(using the invariant for $\C{i-1}$)}\\
&\leq& (1+\eps) \cdot  \sum_{l \in J}  \Delta_1(O_l) \\
&& \textrm{(using  $\eta = \frac{2\alpha^2}{\beta^2} \cdot (1 + 1/\beta) \geq 4$ since $\alpha \geq 1$ and $0 < \beta \leq 1$)}\\
&\leq& (1+\eps) \cdot  \sum_{l \in [k]}  \Delta_1(O_l) \qedhere
\end{eqnarray*}
}
\end{proof}

\noindent
We get the following corollary easily.
\begin{corollary}
\label{cor:sample-repeat}
If $\Delta(P, C^{(i-1)}) > (1+\eps) \cdot \Delta_k(P)$, then $ \frac{\Delta(O_{j_i}, \C{i-1})}{\sum_{l=1}^k  \Delta(O_l, \C{i-1})} > \frac{\eps}{2k}.$
\end{corollary}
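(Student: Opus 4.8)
The plan is to obtain Corollary~\ref{cor:sample-repeat} essentially for free from Lemma~\ref{lem:sampled-repeat}, by taking the contrapositive of the lemma and then applying a one-line averaging argument that exploits the way $j_i$ was chosen.

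First I would observe that the hypothesis $\Delta(P,\C{i-1}) > (1+\eps)\cdot\Delta_k(P)$ is exactly the negation of the conclusion of Lemma~\ref{lem:sampled-repeat}, so the premise of that lemma must fail as well; that is,
\[
\frac{\sum_{l \notin J} \Delta(O_l, \C{i-1})}{\sum_{l=1}^k \Delta(O_l, \C{i-1})} > \frac{\eps}{2}.
\]
(Note that the denominator $\sum_{l=1}^k \Delta(O_l,\C{i-1}) = \Delta(P,\C{i-1})$ is strictly positive under the hypothesis, since $(1+\eps)\Delta_k(P) \ge 0$, so the ratios are well defined.) Next I would invoke the definition of $j_i$: it was chosen as the index $j \notin J$ maximizing $\Delta(O_j,\C{i-1})$. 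Since $|J| = i-1$ by the invariant for $\C{i-1}$, there are exactly $k-(i-1) \le k$ indices outside $J$, and a maximum is at least an average, so
\[
\Delta(O_{j_i},\C{i-1}) \;\ge\; \frac{1}{k-(i-1)}\sum_{l \notin J}\Delta(O_l,\C{i-1}) \;\ge\; \frac{1}{k}\sum_{l \notin J}\Delta(O_l,\C{i-1}).
\]
Dividing by $\sum_{l=1}^k \Delta(O_l,\C{i-1})$ and substituting the bound from the contrapositive yields
\[
\frac{\Delta(O_{j_i},\C{i-1})}{\sum_{l=1}^k \Delta(O_l,\C{i-1})} \;\ge\; \frac{1}{k}\cdot\frac{\sum_{l \notin J}\Delta(O_l,\C{i-1})}{\sum_{l=1}^k \Delta(O_l,\C{i-1})} \;>\; \frac{1}{k}\cdot\frac{\eps}{2} \;=\; \frac{\eps}{2k},
\]
which is the claimed inequality.

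There is essentially no obstacle here: the only ingredients are the contrapositive of Lemma~\ref{lem:sampled-repeat} and the ``maximum $\ge$ average over at most $k$ remaining clusters'' step, which is precisely why $j_i$ is defined as the maximizer of $\Delta(O_j,\C{i-1})$ over $j \notin J$. The single point worth stating explicitly is that all the ratios are well defined, which follows from $\Delta(P,\C{i-1}) > 0$ under the stated hypothesis.
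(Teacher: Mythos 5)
Your proof is correct and matches the intended argument: the paper derives the corollary "easily" from Lemma~\ref{lem:sampled-repeat} by exactly this contrapositive plus the max-at-least-average step over the at most $k$ indices outside $J$, using that $j_i$ maximizes $\Delta(O_j,\C{i-1})$ and that $\sum_{l=1}^k \Delta(O_l,\C{i-1}) = \Delta(P,\C{i-1})$. Nothing is missing.
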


The above lemma and its corollary say that: given that the current set of centers do not give the desired approximation, points in the set $O_{j_i}$ will be sampled with probability at least $\frac{\eps}{2k}$.
However, the points within $O_{j_i}$ are not sampled uniformly.
Some points in $O_{j_i}$ might be sampled with higher probability than other points.
Let us partition the points in $O_{j_i}$ into two sets, one with points that have a large conditional probability of being sampled and those that have small conditional probability.
Let $L_{j_i} \subseteq O_{j_i}$ such that 
\begin{equation}\label{eqn:partition}
\forall p \in L_{j_i}, \frac{D(p, C^{(i-1)})}{\Delta(O_{j_i}, C^{(i-1)})} \leq \frac{\beta \eps}{8 \alpha \eta} \cdot \frac{1}{m_{j_i}}. 
\end{equation}
Let $H_{j_i} = O_{j_i} \setminus L_{j_i}$.
In the next two lemmas, we show that for each point $p \in L_{j_i}, D(p, C^{(i-1)})$ is small and hence the contribution to the total potential with respect to $C^{(i-1)}$ is small. 

\begin{lemma}
\label{lem:key-repeat}
For any $l \notin  J$ and any point $p \in L_l$, $D(p, \C{i-1}) \leq \frac{\eps}{4 \eta} \cdot (r_l + D(p, c_l))$.
\end{lemma}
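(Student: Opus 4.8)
The plan is to start from the inequality that \emph{defines} $L_l$ and turn the claim into an upper bound on $\Delta(O_l,\C{i-1})$. By the analogue of \eqref{eqn:partition} for $l$, every $p\in L_l$ satisfies $D(p,\C{i-1})\le \frac{\beta\eps}{8\alpha\eta}\cdot\frac{\Delta(O_l,\C{i-1})}{m_l}$, so it is enough to bound the average cost $\Delta(O_l,\C{i-1})/m_l$ in terms of $r_l$, $D(p,c_l)$, and --- as will turn out to be unavoidable --- $D(p,\C{i-1})$ itself.

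To bound $\Delta(O_l,\C{i-1})$, let $c^\ast\in\C{i-1}$ be a center closest to $p$, so that $D(p,c^\ast)=D(p,\C{i-1})$. Since $c^\ast\in\C{i-1}$ we have $\Delta(O_l,\C{i-1})\le\Delta(O_l,c^\ast)=\sum_{q\in O_l}D(q,c^\ast)$. At this point I would invoke the \emph{centroid property} rather than a second round of the triangle inequality (which would cost an extra factor of $\alpha$): writing $c_l=\Gamma(O_l)$, so that $\Delta_1(O_l)=\Delta(O_l,c_l)=m_l r_l$, it collapses the sum exactly into $\Delta(O_l,c^\ast)=m_l r_l + m_l\,D(c_l,c^\ast)$. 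Then the $\alpha$-approximate triangle inequality gives $D(c_l,c^\ast)\le\alpha\bigl(D(c_l,p)+D(p,c^\ast)\bigr)$ and $\beta$-approximate symmetry gives $D(c_l,p)\le\frac1\beta D(p,c_l)$, so altogether $\frac{\Delta(O_l,\C{i-1})}{m_l}\le r_l+\frac{\alpha}{\beta}D(p,c_l)+\alpha\,D(p,\C{i-1})$.

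Substituting this into the $L_l$-inequality and using $\beta\le 1\le\alpha$ on the $r_l$-term yields the self-referential bound $D(p,\C{i-1})\le \frac{\eps}{8\eta}\bigl(r_l+D(p,c_l)\bigr)+\frac{\beta\eps}{8\eta}\,D(p,\C{i-1})$. The entire point of the value $\eta=\frac{2\alpha^2}{\beta^2}(1+\frac1\beta)$ is that it makes the coefficient of the circular term tiny: since $\eta\ge\frac{2\alpha^2}{\beta^3}$ and $\eps\le 1$, we get $\frac{\beta\eps}{8\eta}\le\frac{\beta^4}{16\alpha^2}\le\frac1{16}$. Absorbing that term on the left then gives $D(p,\C{i-1})\le\frac{16}{15}\cdot\frac{\eps}{8\eta}\bigl(r_l+D(p,c_l)\bigr)=\frac{2\eps}{15\eta}\bigl(r_l+D(p,c_l)\bigr)\le\frac{\eps}{4\eta}\bigl(r_l+D(p,c_l)\bigr)$, since $\frac{2}{15}<\frac14$.

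The step I expect to be the main obstacle is exactly this circularity: because $O_l$ is still an \emph{uncovered} cluster, the only handle on how far it lies from $\C{i-1}$ routes through the single point $p$, so $D(p,\C{i-1})$ reappears on the right-hand side; the resolution is quantitative, hinging on $\eta$ being large enough to push the self-reference coefficient below $\frac1{16}$. The secondary points to be careful with are the bookkeeping that lets us take $c_l=\Gamma(O_l)$ and hence $\Delta_1(O_l)=m_l r_l$ --- which uses that in an optimal clustering each center is a $1$-median of its own cluster and that, by the centroid property, $\Gamma(O_l)$ attains the $1$-median cost --- and ensuring that the application of the centroid property does not itself reintroduce an $\alpha$ that would break the final constant.
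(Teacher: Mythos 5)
Your proof is correct and follows essentially the same route as the paper's: both start from the defining inequality of $L_l$ with the closest center $c^\ast = c_t'$, expand $\Delta(O_l,c_t')$ via the centroid property, apply the $\alpha$-approximate triangle inequality and $\beta$-approximate symmetry, and then absorb the resulting self-referential $D(p,\C{i-1})$ term using that $\eta$ is large (the paper bounds the denominator $1-\frac{\beta\eps}{8\eta}$ by $\frac12$ where you use $\frac{15}{16}$, a purely cosmetic difference in constants).
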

\begin{proof}
%Let $j_t \in J$ be the index such that $p$ is closest to $c_{t}'$ among all centers in $\C{i-1}$.
Using (\ref{eqn:partition}), we get the following:
\begin{eqnarray*}
\frac{\beta \eps}{8 \alpha \eta} \cdot \frac{1}{m_l} 
&\geq& \frac{e(p, C^{(i-1)})^2}{\Delta(O_l, C^{(i-1)})}
\geq \frac{D(p, c_t')}{\Delta(O_l, c_t')} \\
\Rightarrow \frac{\beta \eps}{8 \alpha \eta} \cdot \frac{1}{m_l}  
&\geq& \frac{D(p, c_t')}{\Delta(O_l, c_l) + m_l \cdot D(c_l, c_t')} \quad \textrm{(using the Centroid property)} \\
\Rightarrow \frac{\beta \eps}{8 \alpha \eta} \cdot \frac{1}{m_l}  
&\geq& \frac{D(p, c_t')}{m_l \cdot r_l + m_l \cdot D(c_l, c_t')} \\
\Rightarrow \frac{\beta \eps}{8 \alpha \eta} \cdot \frac{1}{m_l}  
&\geq& \frac{D(p, c_t')}{m_l \cdot r_l + \alpha \cdot m_l \cdot (D(c_l, p) + D(p, c_t'))}
\end{eqnarray*}
The last inequality is using the $\alpha$-approximate triangle inequality. 
We get the following from the above:
\begin{eqnarray*}
D(p, c_t') &\leq& \frac{\frac{\beta \eps}{8 \alpha \eta}}{1-\frac{\beta \eps}{8 \eta}} \cdot r_l + \frac{\frac{\beta \eps}{8\eta}}{1-\frac{\beta \eps}{8 \eta}} \cdot D(c_l, p) 
\leq \frac{\beta \eps}{4 \eta} \cdot (r_l + D(c_l, p)) \nonumber \\
\Rightarrow D(p, c_t') &\leq& \frac{\beta \eps}{4 \eta} \cdot (r_l + (1/\beta) \cdot D(p, c_l)) \quad \textrm{(using approximate symmetry)} \nonumber \\
\Rightarrow D(p, c_t') &\leq& \frac{\eps}{4 \eta} \cdot (r_l + D(p, c_l)) \qedhere
\end{eqnarray*}
\end{proof}

\begin{lemma}
For any $l \notin J$, $\Delta(L_l, C^{(i-1)}) \leq \frac{\eps}{2 \eta} \cdot \Delta_1(O_l)$.
\end{lemma}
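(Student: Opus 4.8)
The plan is to obtain the bound by simply summing the pointwise estimate of Lemma~\ref{lem:key-repeat} over all points of $L_l$. Since $L_l \subseteq O_l$ and $l \notin J$, Lemma~\ref{lem:key-repeat} applies to every $p \in L_l$ and gives $D(p, \C{i-1}) \leq \frac{\eps}{4\eta}\cdot(r_l + D(p,c_l))$. Adding these inequalities over $p \in L_l$ yields
\[
\Delta(L_l, \C{i-1}) \;=\; \sum_{p \in L_l} D(p, \C{i-1}) \;\leq\; \frac{\eps}{4\eta}\left(|L_l|\cdot r_l + \sum_{p \in L_l} D(p, c_l)\right).
\]

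Next I would bound the two terms in the parenthesis separately, each by $\Delta_1(O_l)$. For the first term, note $|L_l| \leq |O_l| = m_l$, and by the definition of $r_l$ we have $m_l \cdot r_l = \sum_{p \in O_l} D(p, c_l) = \Delta(O_l, c_l) = \Delta_1(O_l)$, where the last equality holds because $c_l$ is the optimal $1$-median (cluster center) for $O_l$. Hence $|L_l|\cdot r_l \leq \Delta_1(O_l)$. For the second term, since $L_l \subseteq O_l$ and all summands are nonnegative, $\sum_{p \in L_l} D(p, c_l) \leq \sum_{p \in O_l} D(p, c_l) = \Delta_1(O_l)$. Combining, $\Delta(L_l, \C{i-1}) \leq \frac{\eps}{4\eta}\cdot 2\,\Delta_1(O_l) = \frac{\eps}{2\eta}\cdot \Delta_1(O_l)$, which is the claim.

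There is essentially no hard step here; this lemma is a routine aggregation of Lemma~\ref{lem:key-repeat}. The only point requiring a moment's care is the identification $m_l \cdot r_l = \Delta_1(O_l)$, which relies on $c_l$ being the optimal center of the cluster $O_l$ (so that $\Delta(O_l, c_l) = \Delta_1(O_l)$) together with the definition $r_l = \Delta(O_l, c_l)/m_l$; everything else is a bound of a sum over $L_l$ by the corresponding sum over $O_l$.
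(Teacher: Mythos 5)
Your proof is correct and follows essentially the same route as the paper: sum the pointwise bound of Lemma~\ref{lem:key-repeat} over $L_l$, enlarge the sums from $L_l$ to $O_l$, and use $m_l \cdot r_l = \Delta(O_l, c_l) = \Delta_1(O_l)$ to conclude. Your explicit remark that this last identification uses $c_l$ being the optimal center of $O_l$ is a point the paper leaves implicit, but the argument is the same.
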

\begin{proof}
Using the previous lemma, we get the following:
\begin{eqnarray*}
\Delta(L_l, C^{(i-1)}) 
&=& \sum_{p \in L_l} D(p, C^{(i-1)}) \\
\Rightarrow \Delta(L_l, C^{(i-1)})  
&\leq& \sum_{p \in L_l} \frac{\eps}{4\eta} \cdot (r_l + D(p, c_l)) \quad \textrm{(using Centroid property)} \\
\Rightarrow \Delta(L_l, C^{(i-1)})   
&\leq& \sum_{p \in O_l} \frac{\eps}{4 \eta} \cdot (r_l + D(p, c_l))
= \frac{\eps}{4 \eta} \cdot \sum_{p \in O_l}  r_l + \frac{\eps}{4 \eta} \cdot  \sum_{p \in O_l} e(p, c_l)^2 \\
\Rightarrow \Delta(L_l, C^{(i-1)})    
&\leq& \frac{\eps}{4 \eta} \cdot (m_l \cdot r_l + \Delta_1(O_l))
= \frac{\eps}{2 \eta} \cdot \Delta_1(O_l) \qedhere
\end{eqnarray*}
\end{proof}

Now, suppose that in the $i^{th}$ iteration, we manage to pick a point $c_i'$ such that 
$\Delta(H_{j_i}, c_i') \leq (1 + \eps/2\eta) \cdot \Delta_1(O_{j_i})$.
Then using the previous two lemmas, we have $\Delta(O_{j_i}, C^{(i-1)} \cup \{c_i'\}) \leq \Delta(H_{j_i}, c_i') + \Delta(L_{j_i}, C^{(i-1)}) \leq (1 + \eps/2\eta) \cdot \Delta_1(O_{j_i}) + (\eps/2\eta) \cdot \Delta_1(O_{j_i}) \leq (1 + \eps/\eta)\cdot \Delta_1(O_{j_i})$. 
In the next two lemmas, we show that there is a good probability of finding such a point.
%Recall that $\itS{i}$ is the sample of size $N$ in this recursive call. 
%We would like to show that  that the invariant will hold in this call as well. 
%First, we prove a simple corollary of Inaba \etal (see Lemma~1 in~\cite{jks12}).
The proof of these lemmas are almost the same (except for minor change in parameters) as the proof of Lemmas 12 and 13 in \cite{jks12}.
%and hence we omit these proofs here. 

\begin{lemma}
\label{lem:clinaba-repeat}
Let $Q$ be a set of $n$ points, and $\gamma$ be a parameter, $0 < \gamma < 1$. Define a random variable $X$ as follows :
with probability $\gamma$, it picks an element of $Q$ uniformly at random, and with  probability $1-\gamma$, it does not
pick any element (i.e., is null). 
Let $X_1, \ldots, X_\ell$ be $\ell$ independent copies of $X$, where $\ell = \frac{4}{\gamma} \cdot f \left(\frac{\eps}{2 \eta}, 0.2\right).$
Let $T$ denote the multi-set of elements of $Q$ picked by $X_1, \ldots, X_\ell$. Then, with probability at least $3/4$,
$T$ contains a subset $U$ of size $f \left(\frac{\eps}{2 \eta}, 0.2\right)$ which satisfies
\begin{equation}\label{eq:clinaba-repeat}
\Delta(P, \Gamma(U)) \leq \left(1 + \frac{\eps}{2 \eta} \right) \cdot \Delta_1(P).
\end{equation}
\end{lemma}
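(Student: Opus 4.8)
The plan is to reduce the statement to the $(f,\eps/(2\eta),0.2)$-sampling property applied to the set $Q$ itself, after first discarding the null picks in a way that leaves the surviving samples uniform and independent. Abbreviate $m := f\!\left(\tfrac{\eps}{2\eta},0.2\right)$, so that $\ell = \tfrac{4m}{\gamma}$. For $i=1,\dots,\ell$ let $Y_i$ be the indicator of the event that $X_i$ is non-null; the $Y_i$ are independent Bernoulli$(\gamma)$ variables, and $Z:=\sum_{i=1}^{\ell} Y_i$, the number of non-null picks, has mean $\gamma\ell = 4m$. The key structural point I would establish first is that, conditioned on the set $I\subseteq\{1,\dots,\ell\}$ of indices with $Y_i=1$ (equivalently, conditioned on $Z=|I|$ together with which indices are null), the picked values $(X_i)_{i\in I}$ are mutually independent and each is uniform on $Q$; this is immediate from the definition of $X$ and holds for \emph{every} admissible $I$.

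Next I would control the chance of too few non-null picks. The Chernoff bound (Theorem~\ref{thm:chernoff}) with $\mu = 4m$ and deviation $3/4$ gives
$$\pr[Z < m]\ \le\ \pr\!\left[Z \le (1-\tfrac34)\mu\right]\ \le\ e^{-(3/4)^2\cdot 4m/2}\ =\ e^{-9m/8}\ \le\ \tfrac{1}{20},$$
the last step using that $m = f(\eps/(2\eta),0.2)$ is at least a small absolute constant, as every relevant sampling bound satisfies. On the complementary event $\{Z\ge m\}$, let $U$ be the sub-multiset of $T$ consisting of the values picked at the first $m$ non-null indices. By the structural point above, for every admissible $I$ with $|I|\ge m$ this $U$ is a uniformly random sample of $m$ points of $Q$ drawn with replacement; hence the $(f,\eps/(2\eta),0.2)$-sampling property, invoked with $Q$ playing the role of ``$P$'', gives $\pr\!\big[\Delta(Q,\Gamma(U)) > (1+\tfrac{\eps}{2\eta})\,\Delta_1(Q)\ \big|\ Z\ge m\big]\ \le\ 0.2$.

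Finally, a union bound over the bad events ``$Z<m$'' and ``$Z\ge m$ but the $U$ above fails (\ref{eq:clinaba-repeat})'' bounds the failure probability by $\tfrac{1}{20}+\tfrac15 = \tfrac14$; on the complement, $T$ contains a subset $U$ of size $m$ satisfying (\ref{eq:clinaba-repeat}), which is the claim. I expect the only genuinely delicate step to be the structural point and its use: one has to be careful that the non-null values remain i.i.d.\ uniform on $Q$ \emph{after} conditioning on which indices are non-null, so that passing to a fixed-size prefix of them is legitimate and the sampling property applies to $U$ verbatim. Everything else is a routine Chernoff tail plus a union bound, with comfortable numerical slack since $f(\eps/(2\eta),0.2)$ is bounded below by a constant. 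Up to the specialization $\delta = 0.2$, this is the argument of Lemma~12 in \cite{jks12}.
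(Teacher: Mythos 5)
Your proposal is correct and follows essentially the same route as the paper's proof: condition on the set of non-null indices so the surviving picks are i.i.d.\ uniform on $Q$, use the Chernoff bound on the number of non-null picks (mean $4\,f(\tfrac{\eps}{2\eta},0.2)$) to guarantee at least $f(\tfrac{\eps}{2\eta},0.2)$ of them, and then invoke the $(f,\tfrac{\eps}{2\eta},0.2)$-sampling property with a union bound; your explicit constants ($\tfrac{1}{20}+\tfrac15=\tfrac14$) even make the mild implicit assumption that $f(\tfrac{\eps}{2\eta},0.2)$ exceeds a small constant slightly weaker than the paper's ``$0.99$'' claim.
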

\begin{proof}
Define a random variable $I$, which is a subset of the index set $\{1, \ldots, \ell\}$, as follows
$I = \{ t : X_t \mbox{ picks an element of $Q$, i.e., it is not null} \}$. 
Conditioned on $I = \{t_1, \ldots, t_r\}$, note that the random variables $X_{t_1}, \ldots, X_{t_r}$ are independent uniform samples from $Q$. 
Thus if $|I| \geq f \left(\frac{\eps}{2 \eta}, 0.2 \right)$, then sampling property wrt. $D$ implies that with probability at least $0.8$, the desired event~(\ref{eq:clinaba-repeat}) happens. 
But the expected value of $|I|$ is $4 \cdot f \left(\frac{\eps}{2 \eta}, 0.2 \right)$, and so, from Chernoff bound (Theorem~\ref{thm:chernoff}) $|I| \geq f \left(\frac{\eps}{2 \eta}, 0.2 \right)$ with probability at least $0.99$. 
Hence, the statement in the lemma is true.
\end{proof}

\begin{lemma}
\label{lem:final-repeat}
With probability at least $3/4$, there exists a subset $\itT{i}$ of $\itS{i}$ of size at most $f \left(\frac{\eps}{2 \eta}, 0.2 \right)$ such that
$ \Delta(H_{j_i}, \Gamma(\itT{i})) \leq \left(1 + \frac{\eps}{2 \eta}\right) \cdot \Delta_1(H_{j_i})$.
\end{lemma}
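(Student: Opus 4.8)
The plan is to obtain this lemma as an application of Lemma~\ref{lem:clinaba-repeat} with $Q=H_{j_i}$. The assertion I would prove is that the multiset $\itS{i}$ of the $N$ points drawn by $D^2$-sampling with respect to $\C{i-1}$ ``contains'', via a coupling, the output of $N$ independent copies of the random variable $X$ of Lemma~\ref{lem:clinaba-repeat} taken with $Q=H_{j_i}$ and with the sampling rate $\gamma$ for which $\tfrac{4}{\gamma}\cdot f\!\left(\tfrac{\eps}{2\eta},0.2\right)=N$, i.e.\ $\gamma=\tfrac{\beta\eps^{2}}{16\alpha\eta k}$. Given that, Lemma~\ref{lem:clinaba-repeat} applied to $H_{j_i}$ produces, with probability at least $3/4$, a subset $U$ of $\itS{i}$ of size $f\!\left(\tfrac{\eps}{2\eta},0.2\right)=M$ with $\Delta(H_{j_i},\Gamma(U))\le\left(1+\tfrac{\eps}{2\eta}\right)\Delta_1(H_{j_i})$, and taking $\itT{i}=U$ finishes the proof.

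To build the coupling I would first dispose of the degenerate case: if $\Delta(P,\C{i-1})\le(1+\eps)\cdot\Delta_k(P)$ then $\C{i-1}$ already gives a $(1+\eps)$-approximation and the invariant for $\C{i}$ needs nothing more, so assume $\Delta(P,\C{i-1})>(1+\eps)\cdot\Delta_k(P)$. Corollary~\ref{cor:sample-repeat} then gives $\Delta(O_{j_i},\C{i-1})>\tfrac{\eps}{2k}\cdot\Delta(P,\C{i-1})$, so a single $D^2$-sample falls in $O_{j_i}$ with probability exceeding $\tfrac{\eps}{2k}$. Second, for any $p\in H_{j_i}$ the negation of $(\ref{eqn:partition})$ says $\Delta(\{p\},\C{i-1})>\tfrac{\beta\eps}{8\alpha\eta\,m_{j_i}}\cdot\Delta(O_{j_i},\C{i-1})$, and multiplying the two estimates gives the pointwise bound $\rho(p)=\tfrac{\Delta(\{p\},\C{i-1})}{\Delta(P,\C{i-1})}>\tfrac{\beta\eps^{2}}{16\alpha\eta k\,m_{j_i}}$ for every $p\in H_{j_i}$. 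Together with $|H_{j_i}|\le m_{j_i}$, this lower bound feeds into the standard thinning argument: a single $D^2$-sampling step, restricted to $H_{j_i}$, dominates the uniform distribution on $H_{j_i}$ scaled by $\gamma$, so it can be coupled with one copy of $X$ in such a way that whenever $X$ picks a point, the $D^2$-sample picks the same point; running this over all $N$ rounds furnishes the coupling claimed above, and since $N=\tfrac{4}{\gamma}f\!\left(\tfrac{\eps}{2\eta},0.2\right)$ the hypothesis of Lemma~\ref{lem:clinaba-repeat} is satisfied.

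I expect the last step of the previous paragraph --- the coupling and the verification that the $N$ rounds available are enough to realize $N$ copies of $X$ for $Q=H_{j_i}$ --- to be the delicate point; it is exactly what forces the magnitude of $N$ and the shape of the threshold in $(\ref{eqn:partition})$, and it is the only place where Corollary~\ref{cor:sample-repeat} (controlling the overall $D^2$-weight of $O_{j_i}$) and the $L/H$ partition $(\ref{eqn:partition})$ (controlling the $D^2$-weight of each point of $H_{j_i}$ relative to $m_{j_i}$) are both invoked. After that, the conclusion is pure bookkeeping: invoke Lemma~\ref{lem:clinaba-repeat} on $H_{j_i}$, read off $U$, and observe that $U$ is a sub-multiset of $\itS{i}$ of size $M$ and hence a legitimate choice for $\itT{i}$.
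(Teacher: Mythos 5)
Your proposal is essentially the paper's own proof: assuming the current centers are not already $(1+\eps)$-good, you combine Corollary~\ref{cor:sample-repeat} with the defining inequality of $H_{j_i}$ (the negation of~(\ref{eqn:partition})) to get $\rho(p) > \gamma/m_{j_i}$ for every $p \in H_{j_i}$ with $\gamma = \frac{\beta\eps^2}{16\alpha\eta k}$, then couple the $N = \frac{4}{\gamma}\cdot f\left(\frac{\eps}{2\eta},0.2\right)$ $D^2$-samples with independent copies of the variable $X$ of Lemma~\ref{lem:clinaba-repeat} applied to $Q = H_{j_i}$, and take $\itT{i} = U$, exactly as the paper does. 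The only differences are cosmetic --- you invoke the definition of $H_{j_i}$ directly where the paper cites Lemma~\ref{lem:key-repeat}, and your phrase ``dominates the uniform distribution on $H_{j_i}$ scaled by $\gamma$'' rests on the per-point bound $\gamma/m_{j_i}$ rather than $\gamma/|H_{j_i}|$ (so the inequality $|H_{j_i}|\le m_{j_i}$ points the wrong way for that literal claim), which is the same simplification the paper itself makes when it asserts the coupled $X_t$ is non-null with probability $\gamma$.
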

\begin{proof}
Recall that $\itS{i}$ contains $N = \frac{64 \alpha \eta k}{\beta \eps^2} \cdot f \left(\frac{\eps}{2 \eta}, 0.2 \right)$ independent samples of $P$ (using $D^2$-sampling). 
We are interested in $\itS{i} \cap H_{j_i}$.
Let $Y_1, \ldots, Y_N$ be  $N$ independent random variables defined as follows : 
for any $t$, $1 \leq t \leq N$, $Y_t$ is obtained by sampling an element of $P$ using $D^2$-sampling with respect to $\C{i-1}$. 
If this sampled element is not in $H_{j_i}$, then it just discards it (i.e., $Y_t$ is null) otherwise $Y_t$ is assigned that element.
Let $\gamma$ denote $\frac{\beta \eps^2}{16 \alpha \eta k}$. 
Corollary~\ref{cor:sample-repeat} and Lemma~\ref{lem:key-repeat} imply that $Y_t$ is assigned a particular element of $O_{j_i}$ with probability at least $\frac{\gamma}{m_{j_i}}$. 
We would now like to apply Lemma~\ref{lem:clinaba-repeat} (observe that $N = \frac{4}{\gamma} \cdot f \left(\frac{\eps}{2 \eta}, 0.2 \right)$). 
We can do this by a simple coupling argument as follows.
For any point $p \in H_{j_i}$, let the probability of it being sampled using $D^2$ sampling be denoted by $\frac{\lambda(p)}{m_{j_i}}$.
Note that $\forall p \in H_{j_i}, \lambda(p) \geq \gamma$.
So, for a particular element $p \in H_{j_i}$, $Y_t$ is assigned $p$ probability $\frac{\lambda(p)}{m_{j_i}} \geq \frac{\gamma}{m_{j_i}}$.
One way of sampling a random variable $X_t$ as in Lemma~\ref{lem:clinaba-repeat} is as follows --  first sample using $Y_t$. 
If $Y_t$ is null, then $X_t$ is also null. 
Otherwise, suppose $Y_t$ is assigned an element $p$ of $H_{j_i}$. 
Then $X_t$ is equal to $p$ with probability $\frac{\gamma}{\lambda(p)}$, and null otherwise. 
It is easy to check that with probability $\gamma$, $X_t$ is a uniform sample from $H_{j_i}$, and null with probability $1-\gamma$. 
Now, observe that the set of elements of $H_{j_i}$ sampled by $Y_1, \ldots, Y_N$ is always a superset of $X_1, \ldots, X_N$. 
We can now use Lemma~\ref{lem:clinaba-repeat} to finish the proof.
\end{proof}

Thus, we will take the index $s_i$ in Step 2(b) as the index of the set $\itT{i}$ as guaranteed by the lemma above.
Finally, by repeating the entire process $2^k$ times, we make sure that we get a $(1+\eps)$-approximate solution
with high probability.
Note that the total running time of our algorithm is $O\left( nd \cdot 2^k \cdot 2^{\tilde{O}\left(k \cdot f \left(\frac{\eps}{2 \eta}, 0.2 \right)\right)} \right)$.

\section{The $k$-means Problem}\label{sec:k-means}

Note that in the $k$-means problem $\mathcal{X} = \mathbb{R}^d$ and the distance measure $D$ is square of the Euclidean distance. 
Since this distance measure satisfies all the properties used in the analysis, we get the following result:

\begin{theorem}
{\bf Find-$k$-means}$(P, k, \eps)$ runs in time $O \left(nd \cdot 2^{\tilde{O}(k/\eps)} \right)$ and gives a $(1 + \eps)$-approximation to the $k$-means problem.
\end{theorem}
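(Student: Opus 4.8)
The plan is simply to instantiate Theorem~\ref{thm:other} with the squared Euclidean distance $D(p,q)=e(p,q)^2$ on $\mathcal{X}=\mathbb{R}^d$. So it suffices to (i) check that this $D$ satisfies the four structural hypotheses (a)--(d) with explicit constants, and (ii) substitute those constants into the running-time bound of Theorem~\ref{thm:other} and simplify.

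First I would verify the structural properties. Symmetry is exact: $e(p,q)^2=e(q,p)^2$, so $D$ is $\beta$-approximately symmetric with $\beta=1$. For the $\alpha$-approximate triangle inequality, combine the ordinary Euclidean triangle inequality $e(p,q)\le e(p,r)+e(r,q)$ with the elementary estimate $(x+y)^2\le 2(x^2+y^2)$ to get $D(p,q)\le 2\,(D(p,r)+D(r,q))$, i.e.\ $\alpha=2$. The Centroid property is the classical parallel-axis identity: for any finite $P\subseteq\mathbb{R}^d$ and any $c\in\mathbb{R}^d$, $\sum_{p\in P}e(p,c)^2=\sum_{p\in P}e(p,\Gamma(P))^2+|P|\cdot e(\Gamma(P),c)^2$, and since the mean is the optimal $1$-means center the first term on the right equals $\Delta_1(P)$; the same identity also shows that $D^2$-sampling here (sampling $p$ with probability proportional to $D(p,C)=\min_{c\in C}e(p,c)^2$) is exactly the squared-distance sampling familiar from $k$-means$++$, so Lemma~\ref{lem:clinaba-repeat} and Lemma~\ref{lem:final-repeat} apply verbatim.

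Next I would establish the $(f,\gamma,\delta)$-sampling property via the sampling lemma of Inaba, Katoh and Imai: for any finite $P\subseteq\mathbb{R}^d$, a uniform random multiset $S$ of $m$ points satisfies $\mathbb{E}\big[\sum_{p\in P}e(p,\Gamma(S))^2\big]=(1+1/m)\,\Delta_1(P)$ (again by the Centroid property), so by Markov's inequality $\sum_{p\in P}e(p,\Gamma(S))^2\le(1+\tfrac{1}{\delta m})\,\Delta_1(P)$ with probability at least $1-\delta$. Hence $D$ has the $(f,\gamma,\delta)$-sampling property for every $0<\gamma,\delta\le 1$ with $f(\gamma,\delta)=\lceil 1/(\gamma\delta)\rceil=O\!\left(\tfrac{1}{\gamma\delta}\right)$.

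Finally I would plug these values into Theorem~\ref{thm:other}. With $\alpha=2$ and $\beta=1$ we get $\eta=\tfrac{2\alpha^2}{\beta^2}(1+1/\beta)=16$, a constant, and using the fixed confidence $0.2<1/2$ appearing in the algorithm gives $f\!\left(\tfrac{\eps}{2\eta},0.2\right)=f\!\left(\tfrac{\eps}{32},0.2\right)=O(1/\eps)$. Theorem~\ref{thm:other} then yields a $(1+\eps)$-approximation to the $k$-means objective in time $O\!\left(nd\cdot 2^{\tilde O(k\cdot f(\eps/2\eta,\,0.2))}\right)=O\!\left(nd\cdot 2^{\tilde O(k/\eps)}\right)$ (the outer $2^k$ repetition factor is absorbed since $\eps\le 1$), which is exactly the claim. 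There is essentially no obstacle beyond bookkeeping here: all the analytic content lives in Theorem~\ref{thm:other}. The only points needing a moment of care are recalling the precise $1/(\gamma\delta)$ form of the Inaba \emph{et al.} bound --- it is this dependence that makes the exponent $\tilde O(k/\eps)$ rather than worse --- and noting that for $D=e(\cdot,\cdot)^2$ the paper's notion of $D^2$-sampling coincides with the standard squared-distance sampling, so no additional argument is required.
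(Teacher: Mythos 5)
Your proposal is correct and follows essentially the same route as the paper: the paper's proof also consists of invoking Theorem~\ref{thm:other} after noting that squared Euclidean distance satisfies $1$-approximate symmetry, $2$-approximate triangle inequality, the Centroid property, and the $(f,\gamma,\delta)$-sampling property with $f(\gamma,\delta)=1/(\gamma\delta)$ (citing Lemmas 1--3 of \cite{jks12} for these facts, which you instead verify directly via the parallel-axis identity and the Inaba et al.\ argument). The only difference is that you spell out these standard verifications rather than citing them, which changes nothing substantive.
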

\begin{proof}
The proof follows from the proof of Theorem~\ref{thm:other} and the fact that the squared Euclidean distance measure over $\mathbb{R}^d$ satisfies $1$-approximate symmetry (trivial), $2$-approximate triangle inequality (see Lemma~3 in~\cite{jks12}), Centroid property (see Lemma~2 in \cite{jks12}), and $(f, \gamma, \delta)$-sampling property for $f(\gamma, \delta) = 1/(\gamma \delta)$ (see Lemma~1 in~\cite{jks12}).
\qedhere
\end{proof}

\bibliography{paper}

\end{document}